\documentclass[conference]{IEEEtran}

\IEEEoverridecommandlockouts

\usepackage{amsmath,amssymb,amsfonts}
\usepackage{mathtools}
\usepackage{mathabx}
\usepackage{amsthm}
\usepackage{nicefrac
}
\newtheorem{theorem}{Theorem}

\usepackage{algorithmic}
\usepackage{graphicx}
\usepackage{textcomp}
\usepackage{xcolor}
\def\BibTeX{{\rm B\kern-.05em{\sc i\kern-.025em b}\kern-.08em
    T\kern-.1667em\lower.7ex\hbox{E}\kern-.125emX}}

\usepackage{subfig}
\usepackage{tikz}
\usepackage[nolist,nohyperlinks]{acronym}
\usepackage{siunitx}
\usepackage{cite}
\usepackage{microtype}
\usepackage{nicefrac}
\renewcommand{\baselinestretch}{0.945} 
\usepackage{microtype}
\newcommand{\circnum}[1]{%
  \tikz[baseline=(n.base)]\node[circle,draw,inner sep=1pt,line width=0.5pt,font=\small](n){#1};%
}

\makeatletter
\newcommand\footnoteref[1]{\protected@xdef\@thefnmark{\ref{#1}}\@footnotemark}
\makeatother

\DeclareMathOperator*{\argmax}{arg\,max}	
\newcommand{\defeq}{\vcentcolon=}

\acrodef{0-D}[0-D]{zero-dimensional}
\acrodef{MC}[MC]{molecular communication}
\acrodef{DF}[DF]{decision-feedback}
\acrodef{CVS}[CVS]{cardiovascular system}
\acrodef{CIR}[CIR]{channel impulse response}
\acrodef{FPT}[FPT]{first passage time}
\acrodef{IG}[IG]{inverse Gaussian}
\acrodef{ISI}[ISI]{inter-symbol interference}
\acrodef{ML}[ML]{maximum-likelihood}
\acrodef{MWC}[MWC]{multipath wireless communications}
\acrodef{MIGHT}[MIGHT]{mixture of inverse Gaussians for hemodynamic transport}
\acrodef{OOK}[OOK]{on-off keying}
\acrodef{PDF}[PDF]{probability density function}
\acrodef{PDP}[PDP]{power delay profile}
\acrodef{RV}[RV]{random variable}
\acrodef{RMS}[RMS]{root mean squared}
\acrodef{SNR}[SNR]{signal-to-noise ratio}
\acrodef{Tx}[Tx]{transmitter}
\acrodef{Rx}[Rx]{receiver}
\acrodef{UCA}[UCA]{uniform concentration assumption}
\acrodef{VN}[VN]{vessel network}
\acrodef{SER}[SER]{symbol error rate}
\acrodef{wrt}[w.r.t.]{with respect to}
\acrodefplural{IG}[IGs]{inverse Gaussians}
\acrodefplural{PDF}[PDFs]{probability density functions}
\acrodefplural{RV}[RVs]{random variables}
\acrodefplural{SNR}[SNRs]{signal-to-noise ratios}
\acrodefplural{CIR}[CIRs]{channel impulse responses}
\acrodefplural{VN}[VNs]{vessel networks}
\acrodefplural{SER}[SERs]{symbol error rates}

\begin{document}

\title{%
Multipath Channel Metrics and Detection in \\Vascular Molecular Communication: \\A Wireless-Inspired Perspective
\thanks{\textsuperscript{*}Co-last authorship.\\
This work was funded in part by the German Federal Ministry of Research, Technology and Space (BMFTR) through Project Internet of Bio-Nano-Things (IoBNT) -- grant number 16KIS1987, in part by the German Research Foundation (Deutsche Forschungsgemeinschaft, DFG) under GRK 2950 -- ProjectID 509922606 and under grant number SCHA 2350/2-1, in part by the European Union’s Horizon Europe -- HORIZON-EIC-2024-PATHFINDEROPEN-01 under grant agreement Project N. 101185661, and in part by the Horizon Europe Marie Skodowska Curie Actions (MSCA)-UNITE under Project 101129618.}
}

\author{\IEEEauthorblockN{\scalebox{.99}{%
  Timo Jakumeit\textsuperscript{1}, Lukas Brand\textsuperscript{1}, Josep M.~Jornet\textsuperscript{2}, Robert Schober\textsuperscript{1}, Maximilian Sch\"afer\textsuperscript{1,*}, and Sebastian Lotter\textsuperscript{1,*}%
}}\\[-0.4cm]
\IEEEauthorblockA{\small \textsuperscript{1}Friedrich-Alexander-Universität Erlangen-Nürnberg (FAU), Erlangen, Germany\\
\textsuperscript{2}Northeastern University, Boston, MA, USA}
}

\maketitle

\begin{abstract}
Motivated by classical communications engineering, early works in \ac{MC} largely adopted established modeling and signal processing concepts from wireless electromagnetic communication systems. In the context of the human \ac{CVS}, \ac{MC} channel models evolved from simple unbounded and single-duct environments mimicking individual blood vessels to complex \ac{VN} topologies, generally at the expense of analytical tractability. Up until now, this has largely prohibited rigorous communication-theoretic analysis of large-scale \acp{VN}.
In this work, we leverage a recently established closed-form analytical channel model for \acp{VN}, named \textit{\ac{MIGHT}}, to conduct the first systematic communication-theoretic study of \ac{MC} in complex, large-scale \acp{VN}. 
Based on \ac{MIGHT}, we derive a Poisson channel noise model and unveil structural analogies between \ac{MWC} and advective-diffusive \ac{MC} in \acp{VN}. In particular, we establish classical \ac{MWC} metrics, namely the \textit{\ac{RMS} delay spread}, the \textit{mean excess delay}, and the \textit{coherence bandwidth}, for \ac{MC} in \acp{VN} and derive closed-form expressions for the channel frequency response and \ac{PDP}.
Building on this characterization, we propose a \ac{VN}-adapted, coherent \ac{DF} detector and show how the derived multipath metrics can inform the choice of critical system parameters like the symbol duration, the sampling time, and the memory length. 
Additionally, we evaluate the detector's performance in different \acp{VN} exhibiting \ac{ISI}. 
Together, these contributions open the door to a systematic, \ac{MWC}-inspired \ac{MC} system design for large-scale \acp{VN}.
\end{abstract}

\acresetall

\begin{IEEEkeywords}
Molecular communication, wireless communication, multipath channel, vessel network, advection-diffusion
\end{IEEEkeywords}

\section{Introduction}\label{sec:Introduction}
In \ac{MC}, information is encoded into the properties of signaling molecules. 
Although the field is inherently interdisciplinary, its early development was strongly influenced by conventional communications engineering. 
Foundational concepts from wireless communications, including channel modeling, modulation, equalization, detection, and channel estimation, were therefore initially transferred largely unchanged to \ac{MC}~\cite{Jamali2019, Kuran2021}.
This influence was also reflected in early \ac{MC} models for the human \ac{CVS}, which soon emerged as a promising application domain~\cite{Nakano2012}.
Since then, analytical channel models for \ac{MC} in the \ac{CVS} evolved from unbounded channels with flow and diffusion~\cite{Srinivas2012}, to single-duct channels, to branched tree structures~\cite{Chahibi2013}, and ultimately to complex multipath \ac{VN} topologies~\cite{Jakumeit2025b}.
While early channel models for simple environments such as single ducts enabled rigorous communication-theoretic analysis due to their analytical tractability, the increased structural detail required to capture the properties of complex \acp{VN} introduced substantial complexity, thereby limiting the feasibility of systematic communication-theoretic treatment.

Nonetheless, several studies identify conceptual parallels between branched \ac{MC} channels and wireless multipath channels. 
In~\cite{Bicen2013}, microfluidic \ac{MC} channels are modeled via transfer functions, inspired by classical signal processing. Although this establishes a formal analogy to communication-theoretic modeling, the considered channel topologies are structurally simple while the analysis remains comparatively involved.
In~\cite{Barros2013}, calcium signaling between neighboring cells is analyzed from a communications-theoretic perspective, revealing parallels to wireless multipath propagation due to multiple diffusive pathways and path-dependent delays. However, the study relies on numerical simulations and lacks a rigorous analytical evaluation of multipath propagation.
In~\cite{Wang2020}, a two-path fluidic \ac{MC} testbed is characterized using \textit{delay spread} and \textit{coherence time} metrics from \ac{MWC}. However, the reported delay spread is attributed primarily to diffusion rather than multipath propagation, and the analysis is limited to a simple two-path system without treating more complex \ac{VN} topologies.
In~\cite{Jakumeit2025b}, a convolution-based channel model for \ac{MC} in \acp{VN} is proposed and experimentally validated. 
To assess the impact of network topology, and thus, multipath propagation, on the received \ac{SNR}, two topology-dependent dispersion metrics are introduced. 
However, due to model complexity, these metrics cannot be directly mapped to established \ac{MWC} metrics.
Collectively, these works highlight parallels between \ac{MC} and \ac{MWC}, yet a rigorous, analytically tractable framework for characterizing multipath communication in complex \acp{VN} remains absent.

To address this gap, in this work, we focus on the communication-theoretic analysis and system design implications of advective-diffusive multipath propagation in \acp{VN}.
In particular, we leverage tools from \ac{MWC}, motivated by the following parallels between \ac{MWC} and \ac{MC} in \acp{VN}, see Fig.~\ref{fig:structural_parallels}:
\textbf{1)}~Signals propagate along multiple paths from the \ac{Tx} to the \ac{Rx}. In \ac{MWC}, multipath propagation arises from reflections and scattering of electromagnetic waves, whereas in \acp{VN} it results from branching and reconverging vasculature.
\textbf{2)}~Individual paths exhibit distinct delays and attenuations, leading to signal superposition at the \ac{Rx}.
\textbf{3)}~System design relies on the temporal and spectral characterization of the channel in order to assess dispersion, \ac{ISI}, and achievable communication performance.

In this paper, we base our analysis on the existing \textit{\ac{MIGHT}} model~\cite{Jakumeit2026}, which offers a convenient closed-form description for advective-diffusive molecule transport in complex \acp{VN}, thereby overcoming the analytical unwieldiness of existing \ac{VN} models.
\begin{figure}
    \centering
    \includegraphics[width=\linewidth]{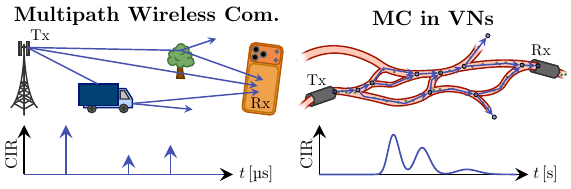}
    \caption{Structural parallels between \ac{MWC} and \ac{MC} in \acp{VN}.}
    \label{fig:structural_parallels}
\end{figure}
The contributions of this work are as follows:
\begin{itemize}
\itemsep0em
    \item Building on the deterministic \ac{MIGHT} channel model~\cite{Jakumeit2026}, we formulate a Poisson noise model for \ac{MC} in \acp{VN}.
    \item Leveraging structural parallels to \ac{MWC}, we transfer and adapt classical multipath channel metrics to advective-diffusive \ac{MC} in \acp{VN}, namely the \textit{\ac{RMS} delay spread}, the \textit{mean excess delay}, and the \textit{coherence bandwidth}.
    \item We obtain, for the first time in the \ac{MC} literature, a closed-form expression for the frequency response of advection-diffusion-driven \ac{VN} channels and analyze its magnitude, phase, and group delay across \acp{VN} of varying topological complexity, yielding direct insight into the frequency-selectivity of \ac{VN} channels without requiring numerical simulation.
    \item We develop a \ac{VN}-adapted sampling and coherent detection strategy that explicitly accounts for multipath-induced dispersion and analyze the dependence of \ac{ISI} and communication performance on the channel characteristics.
\end{itemize}

The remainder of the paper is organized as follows: 
Section~\ref{sec:System_Model} introduces the formal definition of \acp{VN}, assumptions on advective-diffusive molecule transport, the \ac{Tx} and \ac{Rx} models, the deterministic \ac{MIGHT} channel model, and a channel noise model.
This is followed by the derivation of metrics for the characterization of multipath channels in Section~\ref{sec:Multi-Path_Metrics}.
Section~\ref{sec:Coherent_Detectors} proposes sampling and detection strategies for \acp{VN}, and Section~\ref{sec:NumericalResults} presents the evaluation of the channel metrics and detection schemes.
Section~\ref{sec:Conclusion} concludes the paper. 

\section{System and Channel Model}\label{sec:System_Model}
\subsection{Vessel Network Definition}

We adopt the \ac{VN} definition from \cite{Jakumeit2026} and approximate any \ac{VN} topology using three segment types:
\begin{enumerate}
    \itemsep0em
    \item \textit{Pipe:} A pipe $p_i$ is a cylindrical conduit with length $l_i$ and radius $r_i$ transporting fluid from its inlet to its outlet. Pipes connect to other pipes, bifurcations, or junctions. The number of pipes in a \ac{VN} is denoted by $E$.
    \item \textit{Bifurcation:} A bifurcation $b_m$ is a \ac{0-D} connection, where one or more inflow pipe(s) split(s) into multiple outflow pipes. We denote the set of its outflow pipes by $\mathcal{O}(b_m)$. The number of bifurcations in a \ac{VN} is denoted by $B$.
    \item \textit{Junction:} A junction is a \ac{0-D} connection, where multiple inflow pipes merge into one outflow pipe.
\end{enumerate}
Bifurcations, junctions, inlet(s), outlet(s), and any connection point are modeled as nodes. We differentiate three node types:
\begin{enumerate}
\itemsep0em
    \item \textit{Inlet node:} Inlet nodes $n_{\mathrm{in},i}\in \mathcal{N}_\mathrm{in}=\{n_{\mathrm{in},1},\ldots ,n_{\mathrm{in},I}\}$ with $\,I \in \mathbb{N}$ exist at the points of the \ac{VN} where fluid flow is introduced into the \ac{VN}. Here, $\mathbb{N}$ denotes the set of natural numbers.
    \item \textit{Outlet node:} Outlet nodes $n_{\mathrm{out},i}\in \mathcal{N}_\mathrm{out}=\{n_{\mathrm{out},1},\ldots ,n_{\mathrm{out},O}\}$ with $\,O \in \mathbb{N}$ exist at the points of the \ac{VN} where fluid flow leaves the \ac{VN}.
    \item \textit{Connecting node:} All other $C\in\mathbb{N}$ points in the \ac{VN} where pipes are connected to one another are referred to as connecting nodes $n_i\in\mathcal{N}_\mathrm{con}=\{n_{1},\ldots ,n_{C}\}$. 
\end{enumerate}
Pipes are directed edges between nodes, aligned with the fluid flow direction, see Section~\ref{ssec:advective_diffusive_molecule_transport}.
For any node type, the nodes at the inlet and outlet of a pipe $p_i$, i.e., its \textit{source node} and \textit{destination node}, are denoted by $\mathcal{S}(p_i)$ and $\mathcal{D}(p_i)$, respectively.

The representation based on nodes and directed edges allows any \ac{VN} to be described as a directed multigraph, see right-hand side of Fig.~\ref{fig:structural_parallels}.
The set of all distinct directed paths between a given (inlet/connecting) node $n_{a}\in\mathcal{N}_\mathrm{in}\cup \mathcal{N}_\mathrm{con}$ and another (connecting/outlet) node $n_{b}\in\mathcal{N}_\mathrm{con}\cup \mathcal{N}_\mathrm{out}$ is denoted by $\mathcal{P}(n_{a}, n_{b})$.
Each path $P_g$ comprises a subset of pipes and bifurcations given by
\begin{equation}\label{eqn:PathSet}
P_g = \left\lbrace p_i \mid i \in \mathcal{E}_g \right\rbrace \cup \left\lbrace b_m \mid m \in \mathcal{B}_g \right\rbrace,
\end{equation}
where $\mathcal{E}_g \subseteq \left\lbrace 1,\ldots ,E \right\rbrace$ and $\mathcal{B}_g \subseteq \left\lbrace 1,\ldots ,B \right\rbrace$ are the index sets of the pipes and bifurcations\footnote{Note that potential bifurcations at $n_a$ or $n_b$ are \textit{not} included in the path set in~\eqref{eqn:PathSet}. This is because molecules in any path between $n_a$ and $n_b$ do not actually travel \textit{through} $n_a$ or $n_b$, but rather start at $n_a$ and end at $n_b$.} included in $P_g$. 

\subsection{Transmitter, Receiver, and Molecule Transport}\label{ssec:advective_diffusive_molecule_transport}
In this work, we focus on a single-\ac{Tx}, single-\ac{Rx} scenario.
The \ac{Tx} is positioned at $z_a = z_\mathrm{Tx}$ in pipe $p_a$, where $z_a\in[0,l_a]$ denotes the longitudinal coordinate within $p_a$. At time $t=0$, the \ac{Tx} impulsively\footnote{More sophisticated pulse shaping is left for future work; the impulsive 
injection allows the characterization of the channel \ac{ISI}.} releases $N$ signaling molecules.
Furthermore, we assume a transparent counting \ac{Rx}, centered around $z_b=z_\mathrm{Rx}\in[0,l_b]$ in pipe $p_b$ with length $l_\mathrm{Rx}$.
Moreover, 
\begin{equation}
    \mathcal{P}_{\mathrm{Tx},\mathrm{Rx}}=\{P_g\in \mathcal{P}(\mathcal{S}(p_a),\mathcal{D}(p_b))\,\vert\, p_a,p_b\in P_g\}
\end{equation}
contains all paths from the \ac{Tx} to the \ac{Rx} that include $p_a$ and~$p_b$.

Molecules propagate driven by advection and diffusion and under assumption of the Aris-Taylor regime, see~\cite{Jakumeit2026}.
At the inlet nodes $n_{\mathrm{in},i}$, time-invariant\footnote{Time-invariant flow is a valid approximation in small/medium-sized vessels where pulsatility is damped by the Windkessel effect of preceding large arteries.} flow rates $Q_{\mathrm{in},i}$ are applied; 
at the outlet nodes, zero hydraulic resistance is assumed. 
The flow velocities $\bar{u}_i$ and flow rates $Q_i$ are obtained from an equivalent  electrical circuit model~\cite{Jakumeit2026}, and diffusive transport in pipe $p_i$ is governed by the effective diffusion coefficient $\bar{D}_i = r_i^2\bar{u}_i^2/(48D) + D$~\cite[Eq.~(26)]{Aris1956}, where $D$ denotes the molecular diffusion coefficient.

\subsection{Channel Impulse Response}
Employing the \ac{MIGHT} model~\cite{Jakumeit2026}, the molecule flux $\bar{j}_g$ in $\SI{}{\per\second}$ through path $P_g$, observed at position $z_{\mathrm{Rx}}$ in pipe $p_b$ due to the release of $N=1$ molecule at the \ac{Tx} is given as~\cite[Eqs.~(8), (14)]{Jakumeit2026}
\begin{multline}\label{eqn:path_flux}
    \bar{j}_g(z_\mathrm{Rx},t;z_\mathrm{Tx})=\\\frac{\bar{\mu}_g(z_\mathrm{Rx};z_\mathrm{Tx})}{\sqrt{2\pi\bar{\theta}_g(z_\mathrm{Rx};z_\mathrm{Tx})t^3}}\exp\left(-\frac{\left(t-\bar{\mu}_g(z_\mathrm{Rx};z_\mathrm{Tx})\right)^2}{2\bar{\theta}_g(z_\mathrm{Rx};z_\mathrm{Tx})t}\right),
\end{multline}
with the mean, variance, and scale parameter of path $P_g$ given by~\cite[Eq.~(15), (16), (17)]{Jakumeit2026}
\begin{align}
        &\bar{\mu}_g(z_\mathrm{Rx};z_{\mathrm{Tx}})\hspace*{-.5mm}=\hspace*{-.5mm}
        \mu_a(l_a\hspace*{-.5mm}-\hspace*{-.5mm}z_{\mathrm{Tx}})\hspace*{-.5mm}+\hspace*{-.5mm}\mu_b(z_\mathrm{Rx})\hspace*{-.5mm}+\hspace*{-.5mm}\sum_{i\in \mathcal{E}_g\backslash \left\lbrace a,b\right\rbrace}\mu_i(l_i),\label{eq:mom-het-mu}\\
         &\bar{\sigma}_g^2(z_\mathrm{Rx};z_{\mathrm{Tx}})
        \hspace*{-.5mm}=\hspace*{-.5mm} \sigma_a^2(l_a\hspace*{-.5mm}-\hspace*{-.5mm}z_{\mathrm{Tx}})\hspace*{-.5mm}+\hspace*{-.5mm}\sigma_b^2(z_\mathrm{Rx})\hspace*{-.5mm}+\hspace*{-2mm}\sum_{i\in \mathcal{E}_g\backslash \left\lbrace a,b\right\rbrace}\hspace*{-1mm}\sigma_i^2(l_i),\label{eq:mom-het-sigma}\\
        &\bar{\theta}_g(z_\mathrm{Rx};z_{\mathrm{Tx}})\hspace*{-.5mm}=\hspace*{-.5mm}\frac{\bar{\sigma}_g^2(z_\mathrm{Rx};z_{\mathrm{Tx}})}{\bar{\mu}_g(z_\mathrm{Rx};z_{\mathrm{Tx}})},\label{eq:mom-het-theta}
\end{align}
and the mean and variance of pipe $p_i$ given by~\cite[Eq.~(7)]{Jakumeit2026}
\begin{align}
    &\mu_i(z_i) = \frac{z_i}{\bar{u}_i}, 
    &\sigma_i^2(z_i) = \frac{2\bar{D}_i z_i}{\bar{u}_i^3}\,.
    \label{eqn:pipe_mean_variance_scale}
\end{align}
The expression in~\eqref{eqn:path_flux} can also be interpreted as the \ac{PDF} of the path \ac{FPT} $T_g$~\cite{Jakumeit2026}.
Assuming dominant advective flux compared to diffusive flux,
the \ac{CIR} between the \ac{Tx} and the transparent \ac{Rx} in $\SI{}{\per\meter}$ is then obtained under the \ac{UCA} as~\cite[Eq.~(24)]{Jakumeit2026}
\begin{equation}\label{eqn:CIR_between_Tx_and_RX}
    h(z_\mathrm{Rx},t;z_{\mathrm{Tx}})= \frac{l_\mathrm{Rx}}{\bar{u}_b}\sum_{P_g\in\mathcal{P}_{\mathrm{Tx},\mathrm{Rx}}}\gamma_{P_g} \bar{j}_g(z_\mathrm{Rx},t;z_{\mathrm{Tx}})\,.
\end{equation}
Here, $\gamma_{P_g}\in[0,1]$ denotes the fraction of molecules propagating through path $P_g$, which is obtained from the fractions of flow rates at each bifurcation $b_m\in P_g$ as~\cite[Eq.~(22)]{Mosayebi2019}  
\begin{equation}\label{eqn:path_fraction}
    \gamma_{P_g} = \prod\limits_{\substack{p_i,b_m\in P_g,\\ p_i\in\mathcal{O}(b_m )}} \dfrac{Q_{i}}{\sum_{p_v\in\mathcal{O}(b_m)} Q_{v}}\,.
\end{equation}
In the remainder of this paper, for brevity, we drop the spatial arguments $z_\mathrm{Tx}$ and $z_\mathrm{Rx}$ in the notation in~\eqref{eqn:path_flux}--\eqref{eq:mom-het-theta}, \eqref{eqn:CIR_between_Tx_and_RX}, and denote the \ac{CIR} as $h(t)$, the path parameters as $\bar{\mu}_g$, $\bar{\sigma}_g^2$, $\bar{\theta}_g$, and the path flux as $\bar{j}_g(t)$. 

\subsection{Discrete Noise Model}\label{ssec:discrete_time_noise_model}

We assume the \ac{Rx} takes one sample at time $t_\mathrm{s}$ per sent symbol $s[k]\in\{0,1\}$, $k\in\mathbb{Z}$, where \ac{OOK} is chosen as modulation scheme and $\mathbb{Z}$ is the set of integers. 
Moreover, we denote the \textit{expected} number of background noise molecules observed at the \ac{Rx} at any time by $\bar{n}$, which can be attributed, e.g., to biomarkers present in the \ac{CVS} that the \ac{Rx} cannot distinguish from the signaling molecules.
By employing a Poisson counting noise model (please refer to the appendix for a detailed derivation), the sample collected in the $k$-th symbol interval of duration $T_\mathrm{s}$ of the noisy received signal is given as
\begin{equation}\label{eqn:discrete_noise_model}
    r[k]\sim\mathrm{Pois}\left( \sum_{l=0}^{L-1} d[l]s[k-l]+\bar{n}\right),
\end{equation}
where
\begin{equation}
    d[l]=Nh(lT_\mathrm{s}+t_\mathrm{s})
\end{equation}
denotes the expected number of molecules observed at the \ac{Rx} due to the release of $N$ molecules at the \ac{Tx} at the beginning of the $(k-l)$-th symbol interval. Furthermore, $\mathrm{Pois}(\lambda)$ denotes a Poisson-distributed \ac{RV} with mean $\lambda$ and $L$ denotes the considered channel memory length. 

\section{Multipath Channel Metrics}\label{sec:Multi-Path_Metrics}
In this section, in analogy to \ac{MWC}, we derive the established multipath channel metrics \textit{\ac{RMS} delay spread} and \textit{coherence bandwidth} for the case of \ac{MC} in \acp{VN} and present a closed-form expression for the frequency response of the \ac{VN} channel.

\subsection{Root Mean Squared Delay Spread and Mean Excess Delay}\label{ssec:RMS_delay_spread}
In \ac{MWC}, the \ac{RMS} delay spread $\tau_\mathrm{RMS}$ is a measure for the time dispersion that the channel induces.
It is calculated from the \textit{\ac{PDP}}, which in turn is based on the squared \ac{CIR}~\cite{Rappaport2002}. 
Below, we derive the \ac{PDP}, the \textit{mean excess delay}, and the \ac{RMS} delay spread for \ac{MC} in \acp{VN}. 

\begin{theorem}
    The \ac{PDP} for \ac{MC} in \acp{VN} is given by
    \begin{equation}\label{eqn:PDP}
    f_T(t)=\sum_{P_g\in\mathcal{P}_{\mathrm{Tx},\mathrm{Rx}}} w_g \bar{j}_g(t),\quad\text{with}\quad\int_0^\infty f_T(t)\,\mathrm{d}t=1,
    \end{equation}
    where $f_T(t)$ denotes the \ac{PDF} of the network \ac{FPT} $T$, i.e., the delay experienced by a molecule propagating from the \ac{Tx} to the \ac{Rx} in a \ac{VN}, and $w_g$ is the normalized weight of path $P_g$. 
\end{theorem}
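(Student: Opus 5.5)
The plan is to read the \ac{PDP} here as the \ac{CIR} normalized to unit area. In \ac{MWC} the \ac{PDP} is the received power as a function of delay, obtained from the squared complex baseband \ac{CIR}. In \ac{MC} the counting signal is already a nonnegative intensity, so the expected number of molecules observed per unit time plays the role of received power. I will argue this correspondence first. It justifies using $h(t)$ itself, rather than $h^2(t)$, normalized by its integral, and this is the definitional step the theorem rests on.

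Next I compute the normalization constant. Starting from \eqref{eqn:CIR_between_Tx_and_RX}, I integrate over $t\in[0,\infty)$ and exchange sum and integral, which is allowed because the sum is finite and every term is nonnegative. By the remark after \eqref{eqn:path_flux}, each $\bar{j}_g(t)$ is the \ac{IG} \ac{PDF} of the path \ac{FPT} $T_g$ with parameters $\bar\mu_g$ and $\bar\theta_g$. Hence $\int_0^\infty \bar{j}_g(t)\,\mathrm{d}t=1$, and this gives
\begin{equation*}
\int_0^\infty h(t)\,\mathrm{d}t=\frac{l_\mathrm{Rx}}{\bar u_b}\sum_{P_g\in\mathcal{P}_{\mathrm{Tx},\mathrm{Rx}}}\gamma_{P_g}.
\end{equation*}
I then define $f_T(t)=h(t)/\int_0^\infty h(\tau)\,\mathrm{d}\tau$. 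The prefactor $l_\mathrm{Rx}/\bar u_b$ cancels, which leaves \eqref{eqn:PDP} with
\begin{equation*}
w_g=\frac{\gamma_{P_g}}{\sum_{P_{g'}\in\mathcal{P}_{\mathrm{Tx},\mathrm{Rx}}}\gamma_{P_{g'}}}.
\end{equation*}
The unit-integral property follows because the weights $w_g\ge 0$ sum to one and each $\bar j_g$ integrates to one.

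Finally I give the probabilistic interpretation of $f_T$ as the \ac{PDF} of the network \ac{FPT}. Condition on the path a molecule takes: by \eqref{eqn:path_fraction}, a molecule starting at the \ac{Tx} follows path $P_g$ with probability proportional to $\gamma_{P_g}$. Conditioning on reaching the \ac{Rx} renormalizes these probabilities to $w_g$, and given path $P_g$ the delay has density $\bar j_g$. The law of total probability then gives the mixture density $\sum_g w_g\bar j_g(t)$ for $T$.

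I expect the main obstacle to be the first, conceptual step. It must justify why the \ac{MC} analogue of the squared \ac{CIR} is the linear flux: molecule counts are intensities, not amplitudes. It must also check that $\sum_g\gamma_{P_g}$ can be smaller than one when some flow bypasses the \ac{Rx}, so that the renormalization is genuinely needed. Once that is settled, the rest is routine.
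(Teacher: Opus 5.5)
Your proposal is correct and follows essentially the same route as the paper. The paper sets $\chi=\sum_{P_g\in\mathcal{P}_{\mathrm{Tx},\mathrm{Rx}}}\gamma_{P_g}$ as the probability of reaching the \ac{Rx}, defines $w_g=\gamma_{P_g}/\chi$, and concludes because each $\bar{j}_g(t)$ is a \ac{PDF}; this is your normalization of $h(t)$ (whose integral is $\frac{l_\mathrm{Rx}}{\bar{u}_b}\chi$) together with your mixture argument, while the linear-flux-versus-squared-\ac{CIR} choice is treated there as a definition remarked on after the theorem, not something to be proved.
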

\begin{proof}
    We first obtain the probability that a molecule released at the \ac{Tx} reaches the \ac{Rx} as\footnote{Note that $\chi<1$ if some paths from the \ac{Tx} never reach the \ac{Rx}, see Fig.~\ref{fig:structural_parallels}.}
    \begin{equation}
    \chi=\sum_{P_g\in\mathcal{P}_{\mathrm{Tx},\mathrm{Rx}}}\gamma_{P_g}\,.
    \end{equation}
    Accounting for $\chi$, the normalized weight of path $P_g$ is
    \begin{equation}\label{eqn:normalized_weights}
    w_g=\frac{\gamma_{P_g}}{\chi},\quad\text{with}\quad \sum_{P_g\in\mathcal{P}_{\mathrm{Tx},\mathrm{Rx}}}w_g=1\,.
    \end{equation}
    The \ac{PDP} in~\eqref{eqn:PDP} follows because $\bar{j}_g(t)$ is a \ac{PDF} over~$t$.
\end{proof}
In contrast to the classical \ac{PDP} definition in \ac{MWC} based on the squared \ac{CIR}~\cite{Rappaport2002}, we formulate $f_T(t)$ in terms of molecule flux, consistent with its interpretation as a \ac{PDF} over $t$ and ensuring that the moments of $T$ have direct physical meaning.
Although~\eqref{eqn:PDP} is not directly related to power, we choose to preserve the term "\ac{PDP}" to emphasize the analogy to \ac{MWC}.

\begin{theorem}
    The mean excess delay $\mathbb{E}[T]$ and the \ac{RMS} delay spread $\tau_\mathrm{RMS}$ for \ac{MC} in \acp{VN}, defined as the first moment and standard deviation of $T$~\cite{Rappaport2002}, respectively, are obtained as
    \begin{align}
        \mathbb{E}[T]&=\sum_{P_g\in\mathcal{P}_{\mathrm{Tx},\mathrm{Rx}}}w_g\bar{\mu}_g,\\
        \tau_\mathrm{RMS}\hspace*{-.5mm}&=\hspace*{-2.5mm}\sqrt{\underbrace{\sum_{P_g\in\mathcal{P}_{\mathrm{Tx},\mathrm{Rx}}} \hspace*{-5mm}w_g\bar{\mu}_g\bar{\theta}_g}_{\text{Diffusion spread}} + \hspace*{-2mm}\underbrace{\sum_{P_g\in\mathcal{P}_{\mathrm{Tx},\mathrm{Rx}}} \hspace*{-5mm}w_g\bar{\mu}_g^2 -\hspace*{-1mm}\left(\hspace*{-.5mm}\sum_{P_g\in\mathcal{P}_{\mathrm{Tx},\mathrm{Rx}}} \hspace*{-5mm}w_g\bar{\mu}_g\hspace*{-1mm} \right)^2}_{\text{Multipath spread, variance of }\bar{\mu}_g}},\label{eqn:RMS_delay_spread}
    \end{align}
    where $\mathbb{E}[\cdot]$ denotes the expectation operator.
\end{theorem}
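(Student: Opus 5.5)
The plan is to treat $f_T(t)$ from the preceding theorem as a finite mixture density and compute its first two moments component by component. By \eqref{eqn:PDP}, $f_T(t)=\sum_{P_g\in\mathcal{P}_{\mathrm{Tx},\mathrm{Rx}}} w_g \bar{j}_g(t)$, where the weights $w_g\ge 0$ sum to one by \eqref{eqn:normalized_weights}. Each $\bar{j}_g(t)$ in \eqref{eqn:path_flux} is an \ac{IG} \ac{PDF}: writing the exponent as $-\lambda_g(t-\bar{\mu}_g)^2/(2\bar{\mu}_g^2 t)$ with shape parameter $\lambda_g=\bar{\mu}_g^2/\bar{\theta}_g$, the prefactor becomes $\sqrt{\lambda_g/(2\pi t^3)}$. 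Hence $T$ can be represented as follows. First draw a path index $G=g$ with probability $w_g$. Then, conditioned on $G=g$, draw $T=T_g\sim\mathrm{IG}(\bar{\mu}_g,\lambda_g)$.

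For the mean excess delay, I apply linearity of the integral (or the law of total expectation):
\begin{equation*}
\mathbb{E}[T]=\int_0^\infty t f_T(t)\,\mathrm{d}t=\sum_g w_g\int_0^\infty t\,\bar{j}_g(t)\,\mathrm{d}t=\sum_g w_g\bar{\mu}_g .
\end{equation*}
This uses the standard \ac{IG} mean $\bar{\mu}_g$.

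For the second moment, I use the standard \ac{IG} variance $\bar{\mu}_g^3/\lambda_g=\bar{\mu}_g\bar{\theta}_g$. By \eqref{eq:mom-het-theta}, this equals $\bar{\sigma}_g^2$, which is consistent with the path variance in \eqref{eq:mom-het-sigma}. It follows that $\mathbb{E}[T_g^2]=\bar{\mu}_g\bar{\theta}_g+\bar{\mu}_g^2$. Then
\begin{equation*}
\tau_\mathrm{RMS}^2=\mathbb{E}[T^2]-\mathbb{E}[T]^2=\sum_g w_g\bar{\mu}_g\bar{\theta}_g+\sum_g w_g\bar{\mu}_g^2-\Big(\sum_g w_g\bar{\mu}_g\Big)^2 .
\end{equation*}
This is exactly the law of total variance, $\mathrm{Var}[T]=\mathbb{E}[\mathrm{Var}[T\mid G]]+\mathrm{Var}[\mathbb{E}[T\mid G]]$. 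The first term is the diffusion spread and the second is the variance of $\bar{\mu}_G$ under the weights $w_g$, i.e., the multipath spread. Taking the square root gives \eqref{eqn:RMS_delay_spread}.

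The only step needing care is confirming that \eqref{eqn:path_flux} is a properly normalized \ac{IG} density with mean $\bar{\mu}_g$ and variance $\bar{\mu}_g\bar{\theta}_g$, so that the moments are finite and the integrals may be interchanged with the finite sum. I would verify this by matching parameters to the standard \ac{IG} form, or alternatively from the moment-generating function $\exp\big((\lambda_g/\bar{\mu}_g)(1-\sqrt{1-2\bar{\mu}_g^2 s/\lambda_g})\big)$. Once that is established, the remaining steps are routine algebra.
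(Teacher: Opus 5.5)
Your proposal is correct and follows essentially the same route as the paper: compute the per-path moments $\mathbb{E}[T_g]=\bar{\mu}_g$ and $\mathbb{E}[T_g^2]=\bar{\mu}_g^2+\bar{\mu}_g\bar{\theta}_g$, carry them through the finite mixture $f_T$ by linearity, and subtract $(\mathbb{E}[T])^2$. The differences are cosmetic: you get the path moments by matching $\bar{j}_g$ to the standard inverse Gaussian form with shape $\bar{\mu}_g^2/\bar{\theta}_g$, whereas the paper reads them off the MIGHT moment definitions, and you add the law-of-total-variance reading of the two terms.
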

In~\eqref{eqn:RMS_delay_spread}, the first term under the root captures diffusion-induced dispersion, the second multipath-induced dispersion.
\begin{proof}
    For each path $P_g$, the first, second central, and second moments of $T_g$ are obtained as
\begin{align}\label{eqn:path_delay_moments}
    &\mathbb{E}\left[T_g\right]\hspace*{-.5mm}\overset{\eqref{eq:mom-het-mu}}{=}\hspace*{-.5mm}\bar{\mu}_g,\hspace*{-2.6mm}
    & \mathrm{Var}(T_g)\hspace*{-.5mm}\overset{\eqref{eq:mom-het-sigma}}{=}\hspace*{-.5mm}\bar{\sigma}^2_g\hspace*{-.5mm}\overset{\eqref{eq:mom-het-theta}}{=}\hspace*{-.5mm}\bar{\mu}_g\bar{\theta}_g,\hspace*{-2.6mm}
    && \mathbb{E}\left[T_g^2\right]\hspace*{-.5mm}=\hspace*{-.5mm}\bar{\mu}_g^2\hspace*{-.5mm}+\hspace*{-.5mm}\bar{\mu}_g\bar{\theta}_g,
\end{align}
where $\mathrm{Var}(\cdot)$ denotes the variance operator.
From~\eqref{eqn:path_delay_moments}, the first and second moments of $T$ are derived as
\begin{align}
    &\mathbb{E}\left[T\right]=\int_0^\infty t\, f_T(t)\, \mathrm{d}t \overset{\eqref{eqn:PDP}}{=} \int_0^\infty t \sum_{P_g\in\mathcal{P}_{\mathrm{Tx},\mathrm{Rx}}} \hspace*{-3mm}
w_g \bar{j}_g(t)\, \mathrm{d}t \nonumber\\
&\overset{\eqref{eqn:normalized_weights}}{=} \hspace*{-3mm}\sum_{P_g\in\mathcal{P}_{\mathrm{Tx},\mathrm{Rx}}} \hspace*{-4mm}w_g 
\int_0^\infty \hspace*{-2mm}t\, \bar{j}_g(t)\, \mathrm{d}t = \hspace*{-4mm}\sum_{P_g\in\mathcal{P}_{\mathrm{Tx},\mathrm{Rx}}}\hspace*{-4mm} w_g \mathbb{E}[T_g]\overset{\eqref{eqn:path_delay_moments}}{=}\hspace*{-4mm}\sum_{P_g\in\mathcal{P}_{\mathrm{Tx},\mathrm{Rx}}}\hspace*{-4mm} w_g \bar{\mu}_g,\label{eqn:first_moment_T}\\
    &\mathbb{E}\left[T^2\right]=\hspace*{-1mm}\sum_{P_g\in\mathcal{P}_{\mathrm{Tx},\mathrm{Rx}}} \hspace*{-3mm}w_g\mathbb{E}\left[T_g^2\right]\overset{\eqref{eqn:path_delay_moments}}{=}\hspace*{-2mm}\sum_{P_g\in\mathcal{P}_{\mathrm{Tx},\mathrm{Rx}}} \hspace*{-3mm}w_g\left(\bar{\mu}_g^2+\bar{\mu}_g\bar{\theta}_g\right)\,.\label{eqn:second_moment_T}
\end{align}
Lastly, the \ac{RMS} delay spread of the channel between the \ac{Tx} and \ac{Rx} follows as the standard deviation of $T$
\begin{equation}\label{eqn:RMS_delay_spread_formula}
    \tau_\mathrm{RMS}=\sqrt{\mathrm{Var}(T)}=\sqrt{\mathbb{E}\left[T^2\right] - \left(\mathbb{E}\left[T\right]\right)^2}\,.
\end{equation}
Inserting~\eqref{eqn:first_moment_T} and~\eqref{eqn:second_moment_T} into~\eqref{eqn:RMS_delay_spread_formula} yields~\eqref{eqn:RMS_delay_spread}.
\end{proof}

\subsection{Coherence Bandwidth}

In \ac{MWC}, the \textit{coherence bandwidth} $B_\mathrm{c}$ is the bandwidth over which the channel's frequency response $H(f)$ remains highly correlated across frequency. Analogous to \ac{MWC}, for \acp{VN}, the coherence bandwidth in $\SI{}{\hertz}$ can be roughly estimated from the \ac{RMS} delay spread as~\cite[Eq.~(12)]{Sklar1997}
\begin{equation}\label{eqn:coherence_bandwidth}
    B_\mathrm{c} \approx \frac{1}{2\pi \tau_\mathrm{RMS}}\,.
\end{equation}
A small $B_\mathrm{c}\hspace*{-.5mm}<\hspace*{-.5mm}\frac{1}{T_\mathrm{s}}$ indicates a highly frequency-selective~channel.

\subsection{Frequency Response, Magnitude, Phase, and Group Delay}

It is important to note that the expression in~\eqref{eqn:coherence_bandwidth} only yields a rough approximation of $B_\mathrm{c}$. 
A full spectral analysis 
of the channel provides a more complete picture of its frequency-selectivity.
To this end, we derive the frequency response $H(f)$ of the \ac{VN} channel from the \ac{CIR} in \eqref{eqn:CIR_between_Tx_and_RX} as
\begin{align}\label{eqn:frequency_response}
    &H(f)=\mathcal{F}\{h(t)\}=\int_0^\infty h(t)\exp (-\mathrm{j}2\pi ft)\,\mathrm{d}t\nonumber\\
    &=\frac{l_\mathrm{Rx}}{\bar{u}_b}\sum_{P_g\in\mathcal{P}_{\mathrm{Tx},\mathrm{Rx}}}\hspace*{-3mm}\gamma_{P_g} \exp\left(\frac{\bar{\mu}_g}{\bar{\theta}_g}\left(1-\sqrt{1+\mathrm{j}4\pi \bar{\theta}_g f}\right)\right),
\end{align}
where $f$, $\mathcal{F}\{\cdot\}$, and $\mathrm{j}$ denote the frequency, the Fourier transform with respect to $t$, and the imaginary unit, respectively. In Section~\ref{sec:NumericalResults}, we investigate the magnitude $|H(f)|$, phase $\phi (f)=\arg \{H(f)\}$, and group delay $\tau_\mathrm{g}(f)=-\frac{1}{2\pi}\frac{\mathrm{d}\phi(f)}{\mathrm{d}f}$, as they jointly provide a comprehensive picture of the channel's characteristics in the frequency domain.

\section{Sampling and Coherent Detection}\label{sec:Coherent_Detectors}
Below, we first propose a simple \ac{VN}-adapted sampling strategy.
Second, we detail how the minimum symbol duration required to avoid \ac{ISI} can be estimated from the \ac{RMS} delay spread.
Third, we derive a \ac{DF} detector employing the proposed sampling strategy.

\subsection{Strongest Path Peak Time Sampling}
We propose the \textit{strongest path peak time sampling} strategy, which, 
within one symbol interval of length $T_\mathrm{s}$, samples the received 
signal at the peak time $t^\mathrm{peak}_{g^*}$ of the path $P_{g^*}$ whose weighted 
flux contribution $\gamma_{P_{g^*}}\bar{j}_{g^*}(t)$ attains the highest peak value. 
The sampling time follows as
\begin{equation}
    t_\mathrm{s} = t^\mathrm{peak}_{g^*},\quad\text{with}\quad 
    g^*=\argmax_g \, \gamma_{P_g}\bar{j}_g\!\left(t_g^\mathrm{peak}\right)\,.
\end{equation}
The peak time $t^\mathrm{peak}_{g}$ of path $P_g$ is obtained from the path 
flux in~\eqref{eqn:path_flux} by solving
\begin{equation}\label{eqn:path_peak_time_condition}
    \frac{\partial }{\partial t}\bar{j}_g(t)\hspace*{-.5mm} 
    \overset{\eqref{eqn:path_flux}}{=} \hspace*{-.7mm}-\frac{\bar{\mu}_g 
    t\left(t^2 \hspace*{-.5mm}+\hspace*{-.5mm} 3\bar{\theta}_g t 
    \hspace*{-.5mm}-\hspace*{-.5mm} \bar{\mu}_g^2 \right)}{\sqrt{\pi}
    \left(2\bar{\theta}_g t^3\right)^{3/2}}\hspace*{-.5mm}\exp\hspace*{-.5mm}
    \left(\hspace*{-.7mm}-\frac{\left(t\hspace*{-.5mm}-\hspace*{-.5mm}
    \bar{\mu}_g\right)^2}{2\bar{\theta}_gt}\hspace*{-.5mm}\right)
    \hspace*{-1mm}=\hspace*{-.5mm}0.
\end{equation}
Keeping only the solution that satisfies $t>0$ yields
\begin{align}\label{eqn:candidate_peak_times}
t_g^\mathrm{peak}=\frac{-3\bar{\theta}_g+\sqrt{9\bar{\theta}_g^2+
4\bar{\mu}_g^2}}{2}\,.
\end{align}
Here, $\lim_{\bar{\theta}_g\to 0} t_g^\mathrm{peak} 
\overset{\eqref{eqn:candidate_peak_times}}{=} \bar{\mu}_g$, i.e., for 
limited diffusion, the path peak time approaches the path mean $\bar{\mu}_g$.

\subsection{Inter-Symbol Interference-Dependent Symbol Duration}

In diffusion-only and single-duct advective-diffusive \ac{MC} systems, \ac{ISI} arises from long tails of the \ac{CIR} caused by molecular diffusion and effective dispersion under laminar flow, respectively. In complex \acp{VN}, topology-induced multipath propagation further increases signal spread, as evident from the \ac{RMS} delay spread in~\eqref{eqn:RMS_delay_spread}, thereby aggravating \ac{ISI}.

In \ac{MWC}, the \ac{RMS} delay spread is a key parameter for \ac{ISI} mitigation, and the symbol duration is typically chosen proportionally to it, as most signal energy is concentrated within approximately $\mathbb{E}[T]\pm 2\tau_\mathrm{RMS}$. Adopting this established rule, we select~\cite[Eq.~(4.42)]{Rappaport2002}
\begin{equation}\label{eqn:symbol_duration_rule}
    T_\mathrm{s} \geq c \cdot \tau_\mathrm{RMS},
\end{equation}
where $c$ typically lies between $4$ and $6$, depending on the tolerable \ac{ISI} level. The applicability of this design rule to \ac{MC} in \acp{VN} is evaluated in Section~\ref{sec:NumericalResults}.

\begin{figure*}
    \centering
    \includegraphics[width=\linewidth]{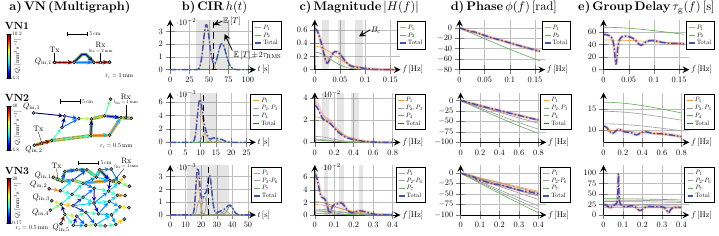}
    \caption{Time- and frequency-domain characterization of \acp{VN}. \textbf{a)}~Vessel lengths are drawn to scale, radii are given below each \ac{VN}. Flow rates and fastest (orange)/slowest (green) paths are color-coded. \ac{Tx} and \ac{Rx} positions are marked. Inlet flow rates in $\SI{}{\meter\cubed\per\second}$ are $Q_{\mathrm{in},1}=1.021\hspace*{-.5mm}\times\hspace*{-.5mm} 10^{-8}$ for \ac{VN}1, $Q_{\mathrm{in},\{ 1,2\}}=\{ 0.5,4 \}\hspace*{-.5mm}\times\hspace*{-.5mm} 10^{-8}$ for \ac{VN}2, and $Q_{\mathrm{in},\{ 1,\ldots,5\}}=\{ 1,1.5,1.2,1,2 \}\hspace*{-.5mm}\times\hspace*{-.5mm} 10^{-8}$ for \ac{VN}3, and $D=\SI{1.46e-7}{\meter\squared\per\second}$. \textbf{b)}~\acp{CIR} and path contributions are shown. Mean excess delay $\mathbb{E}[T]$ and the interval $\mathbb{E}[T]\pm2\tau_\mathrm{RMS}$ containing most energy are marked by a black dashed line and gray background, respectively. \textbf{c)--e)}~Magnitude, phase, and group delay are shown. Thick red curves are numerically computed from the \acp{CIR}, thick blue curves are based on~\eqref{eqn:frequency_response}. In \textbf{c)}, the coherence bandwidth $B_\mathrm{c}$ is shown by a gray background at several frequencies.}
    \label{fig:spectra}
\end{figure*}

\subsection{Coherent Adaptive Decision-Feedback Detector}

Below, we briefly derive a coherent \ac{DF} detector for the noise model proposed in Section~\ref{ssec:discrete_time_noise_model}. First, in a genie-aided approach (i.e., assuming knowledge of $s[k-l],l\in\{1,\ldots , L-1\}$), we define the conditional Poisson rates given the \textit{true} transmit symbol $s[k]=b\in\{0,1\}$, i.e., 
\begin{align}
    \lambda_b&=b d[0] \hspace{-.5mm}+\hspace{-.5mm} \sum_{l=1}^{L-1} d[l]s[k-l]\hspace{-.5mm}+\hspace{-.5mm}\bar{n} \defeq b d[0] \hspace{-.5mm}+\hspace{-.5mm}\lambda_\mathrm{ISI}^+,\label{eqn:Poisson_rates}
\end{align}
where $\lambda_\mathrm{ISI}^+$ denotes the \ac{ISI} plus noise term. Using~\eqref{eqn:Poisson_rates}, the likelihood of observing $r[k]=r$ under hypothesis $s[k]=b$ is
\begin{equation}\label{eqn:likelihood}
    P(r[k]=r\,|\, s[k]=b) = \frac{\lambda_b^r}{r!}\exp (-\lambda_b)\,.
\end{equation}
The detector estimates the transmitted symbol $s[k]$ using
\begin{equation}
    \hat{s}[k]=\argmax_{b\,\in\{0,1\}} P(r[k]=r\,|\, s[k]=b),
\end{equation}
with $P(s[k]\hspace*{-.5mm}=\hspace*{-.5mm}0)\hspace*{-.5mm}=\hspace*{-.5mm}P(s[k]\hspace*{-.5mm}=\hspace*{-.5mm}1)\hspace*{-.5mm}=\hspace*{-.5mm}0.5$ assumed henceforth.
By deducing the log-likelihood from~\eqref{eqn:likelihood} and dropping the $\ln(r!)$ term, we conclude that the detector decides for $s[k]=1$ if
\begin{align}\label{eqn:ML_decision_rule}
    \ln\left(P(r[k]\hspace{-.5mm}=\hspace{-.5mm}r| s[k]\hspace{-.5mm}=\hspace{-.5mm}1) \right)&> \ln\left(P(r[k]\hspace{-.5mm}=\hspace{-.5mm}r| s[k]\hspace{-.5mm}=\hspace{-.5mm}0) \right)\nonumber\\
    \Leftrightarrow\quad r&>\frac{\lambda_1 - \lambda_0}{\ln\left(\frac{\lambda_1}{\lambda_0}\right)}\,.
\end{align}
By inserting $\lambda_1-\lambda_0=d[0]$ and $\lambda_1/\lambda_0=(d[0] + \lambda_\mathrm{ISI}^+)/\lambda_\mathrm{ISI}^+$, the decision rule in~\eqref{eqn:ML_decision_rule} simplifies to
\begin{equation}\label{eqn:final_decision_rule}
    r[k]\overunderset{\hat{s}[k]=1}{\hat{s}[k]=0}{\gtrless}\frac{d[0]}{\ln\left(1 + \frac{d[0]}{\lambda_\mathrm{ISI}^+}\right)}=\psi,
\end{equation}
where $\psi$ denotes the adaptive decision threshold, accounting for previously sent symbols.
In practice, since the \textit{true} transmitted symbols $s$ are not known, $\lambda_\mathrm{ISI}^+$ is estimated from the previously \textit{estimated} transmitted symbols $\hat{s}$ as
\begin{equation}
    \lambda_\mathrm{ISI}^+\approx \sum_{l=1}^{L-1} d[l]\hat{s}[k-l]\hspace{-.5mm}+\hspace{-.5mm}\bar{n},
\end{equation}
i.e., the rule in~\eqref{eqn:final_decision_rule} leads to an \textit{adaptive \ac{DF}}~architecture.

Lastly, we define the \ac{SER} as
\begin{align}\label{eqn:SER}
    \hspace*{-3mm}\mathrm{SER}\hspace*{-.6mm}=\hspace*{-.6mm}P(\hat{s}[k]\hspace*{-.6mm}\neq\hspace*{-.6mm} s[k])&\hspace*{-.6mm}=\hspace*{-.6mm}P(s[k]\hspace*{-.7mm}=\hspace*{-.7mm}0)P(\hat{s}[k]\hspace*{-.7mm}=\hspace*{-.7mm}1\vert s[k]\hspace*{-.7mm}=\hspace*{-.7mm}0) \nonumber\\
    &\hspace*{-.6mm}+\hspace*{-.6mm} P(s[k]\hspace*{-.7mm}=\hspace*{-.7mm}1)P(\hat{s}[k]\hspace*{-.7mm}=\hspace*{-.7mm}0\vert s[k]\hspace*{-.7mm}=\hspace*{-.7mm}1).
\end{align}

\section{Results}\label{sec:NumericalResults}
We illustrate the parallels between \ac{MWC} and \ac{MC} using three \acp{VN} of increasing structural complexity, see Fig.~\ref{fig:spectra}a), by first conducting a time- and frequency-analysis of the systems, and subsequently analyzing the communication performance in the \acp{VN} using the proposed detector.

\begin{figure*}
    \centering
    \includegraphics[width=\linewidth]{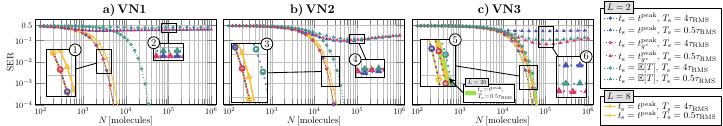}
    \caption{\acp{SER} of the proposed \ac{DF} detector for \textbf{a)} VN1, \textbf{b)} VN2, and \textbf{c)} VN3. Three sampling strategies and two symbol durations $T_\mathrm{s}=\{0.5\tau_\mathrm{RMS},4\tau_\mathrm{RMS}\}$ are considered, corresponding to high- and low-\ac{ISI} regimes, respectively. Moreover, channel memory $L$ is varied.}
    \label{fig:detector_evaluation}
\end{figure*}

\subsection{Time- and Frequency-Analysis of Vessel Network Channels}

Depending on the \ac{VN} topology and associated path means $\bar{\mu}_g$ and variances $\sigma^2_g$, the \acp{CIR} of \acp{VN} can be rather unimodal, see \ac{CIR} of VN2, or multimodal, see \acp{CIR} of VN1 and VN3.
In all cases, $\mathbb{E}[T]\pm 2\tau_\mathrm{RMS}$ reliably estimates the time interval containing most signal energy, see gray areas in Fig.~\ref{fig:spectra}b).
The gray areas appear to be slightly left-shifted, which stems from the general positive skew of \ac{VN} \acp{CIR} due to early paths carrying most of the molecules.

Considering $|H(f)|$, it can be seen that \acp{VN} generally behave like low-pass filters, since (effective) diffusion prohibits the transmission of high-frequency molecular signals. 
Cutoff frequencies as well as the coherence bandwidths $B_\mathrm{c}$, shown as gray areas in Fig.~\ref{fig:spectra}c), depend on the delay spread $\tau_\mathrm{RMS}$ of the \ac{VN}, see~\eqref{eqn:coherence_bandwidth}. For instance, VN1 exhibits a large $\tau_\mathrm{RMS}=\SI{10.94}{\second}$ and small $B_\mathrm{c}=\SI{0.015}{\hertz}$, whereas VN2 exhibits a shorter $\tau_\mathrm{RMS}=\SI{2.17}{\second}$ and consequently greater $B_\mathrm{c}=\SI{0.07}{\hertz}$.
The path magnitudes shown in Fig.~\ref{fig:spectra}c) do not necessarily add up to the total magnitude, because phase differences between complex-valued path contributions cause constructive and destructive interference, see~\eqref{eqn:frequency_response}, resulting in ripples in the total magnitude. 
However, it is visible that in the investigated \acp{VN}, the total magnitude generally follows the curve of the dominant path contribution (according to path weight $\gamma_g$, see~\eqref{eqn:frequency_response}). This is also true for the phase and group delay in Figs.~\ref{fig:spectra}d) and \ref{fig:spectra}e).
Local minima in $|H(f)|$ (and in $\phi(f)$ and $\tau_\mathrm{g}(f)$) occur at frequencies where destructive path interference is particularly strong. These frequencies are thus unsuitable for communication due to high frequency-selectivity.

Fig.~\ref{fig:spectra}d) shows the unwrapped phase contributions of different paths. The phases of paths with larger means $\bar{\mu}_g$ decrease faster with frequency than those with smaller means, see, e.g., $P_2$ and $P_1$ in VN1. Larger phase differences between paths lead to more frequent ripples in $|H(f)|$.

The total group delay at $f=0$ corresponds to the mean excess delay of the \ac{VN}, i.e., $\tau_\mathrm{g}(0)=\mathbb{E}[T]$. Comparing the \acp{CIR} and the group delay plots in Fig.~\ref{fig:spectra}e), it can be seen that individual path group delays at $f=0$ equal the path means~$\bar{\mu}_g$.
Note that $\tau_\mathrm{g}(f>0)$ does not admit a direct physical interpretation in terms of molecule transport in \acp{VN}, but is nonetheless shown in Fig.~\ref{fig:spectra}e) for completeness.

Overall, the time- and frequency-domain characterizations presented here provide a principled basis for \ac{MC} system design in \acp{VN}, see also Section~\ref{ssec:detector_evaluation}. 
The moments $\mathbb{E}[T]$ and $\tau_\mathrm{RMS}$ directly inform the choice of $T_\mathrm{s}$ via~\eqref{eqn:symbol_duration_rule}, while $B_\mathrm{c}$ provides a compact indicator of the channel's frequency-selectivity.

\subsection{Evaluation of Vessel Network-Adapted Detector}\label{ssec:detector_evaluation}

For each of the exemplary \acp{VN} in Fig.~\ref{fig:spectra}a), we evaluate the \ac{SER} in~\eqref{eqn:SER} as a function of the number of released molecules $N\in[10^2,10^6]$, see Fig.~\ref{fig:detector_evaluation}.
We focus on a low-\ac{ISI} scenario with long symbol duration $T_\mathrm{s}=4\tau_\mathrm{RMS}$ and a high-\ac{ISI} scenario with very short symbol duration $T_\mathrm{s}=0.5\tau_\mathrm{RMS}$.
For each \ac{ISI} regime, we test three different sampling strategies: \textbf{1)} Sampling at the numerically determined global peak time $t_\mathrm{s}=t^\mathrm{peak}$ of the \ac{CIR} in~\eqref{eqn:CIR_between_Tx_and_RX}, which maximizes the direct tap $d[0]$ and is thus optimal in the absence of \ac{ISI}, \textbf{2)} sampling at the strongest path peak time $t_\mathrm{s}=t_{g^*}^\mathrm{peak}$, given in~\eqref{eqn:candidate_peak_times}, which, in contrast to $t^\mathrm{peak}$, can be derived analytically and purely from path parameters (without requiring full \ac{CIR} superposition), and \textbf{3)}~sampling at the mean excess delay $t_\mathrm{s}=\mathbb{E}[T]$, given in~\eqref{eqn:path_delay_moments}, which represents the average molecule arrival time and thus intuitively places the sampling instant at the temporal center of the received signal.
The detector has a comparatively short default channel memory of $L=2$, i.e., one previous symbol estimate is used to estimate the \ac{ISI}, and the expected number of noise molecules is $\bar{n}=500$.
Additionally, the yellow curves in Fig.~\ref{fig:detector_evaluation} show the performance for $L=8$, motivated by the fact that a \ac{CIR} at most spans a time interval of roughly $4\tau_\mathrm{RMS}=8\cdot 0.5\tau_\mathrm{RMS}$, see Fig.~\ref{fig:spectra}b).
Each data point is obtained by averaging the \ac{SER} over $10^6$ randomly transmitted symbols.

Across all \acp{VN}, for $L=2$, two groups of \ac{SER} curves are observed. Low-\ac{ISI} curves (round markers) approach zero as $N$ increases, since errors are mainly due to noise molecules, whose impact diminishes at high $N$.
High-\ac{ISI} curves (triangle markers) initially decrease with $N$ but approach an \ac{ISI}-induced error floor for large $N$, due to the detector's limited channel memory, preventing reliable \ac{ISI} estimation even for very large~$N$.

Notably, looking at the low-\ac{ISI} curves (round markers) for $L=2$, depending on the \ac{VN}, there are smaller or larger performance gaps between the three proposed sampling strategies.
For \ac{VN}1, the $t^\mathrm{peak}$ and $t^\mathrm{peak}_{g^*}$ strategies perform equally well (see \circnum{1}), because, in this \ac{VN}, the strongest path also dictates the global maximum of the \ac{CIR}, see Fig.~\ref{fig:spectra}b).
Sampling at $t_\mathrm{s}=\mathbb{E}[T]$ yields much worse performance, because $\mathbb{E}[T]$ lies approximately in the low-signal valley between the two path contributions of the \ac{VN}, see Fig.~\ref{fig:spectra}b).
In comparison, for \ac{VN}2, the gap between $t^\mathrm{peak}$ (or $t^\mathrm{peak}_{g^*}$) and $t_\mathrm{s}=\mathbb{E}[T]$ is reduced (see \circnum{3}), because the \ac{CIR} exhibits a more unimodal shape, see Fig.~\ref{fig:spectra}b).
For \ac{VN}3, $t^\mathrm{peak}$ and $t^\mathrm{peak}_{g^*}$ differ significantly, see Fig.~\ref{fig:spectra}b). 
As such, the three proposed sampling strategies exhibit largely different \acp{SER} in Fig.~\ref{fig:detector_evaluation}c), see \circnum{5}.
In summary, for the low-\ac{ISI} regime, we find that, while sampling at $t^\mathrm{peak}$ generally yields the best performance, \textit{depending on the \ac{VN} topology}, other sampling strategies ($t^\mathrm{peak}_{g^*}$ or $\mathbb{E}[T]$) may not be much worse in terms of \ac{SER}.

For \ac{VN}1 and \ac{VN}2, in the high-\ac{ISI} regime (triangle markers) for $L=2$, the performance ordering of the three sampling strategies is the same as in the low-\ac{ISI} case (see \circnum{2}, \circnum{4}): $t_\mathrm{s}=t^\mathrm{peak}$ performs best, followed by $t^\mathrm{peak}_{g^*}$ and $\mathbb{E}[T]$.
Moreover, all high-\ac{ISI} curves for \ac{VN}2 exhibit lower \acp{SER} than for \ac{VN}1 (see \circnum{4}, \circnum{2}).
The reason for this is that the \ac{CIR} of \ac{VN}2 lies more compactly in the $\mathbb{E}[T]\pm2\tau_\mathrm{RMS}$ interval compared to that of \ac{VN}1, see Fig.~\ref{fig:spectra}b).
Thus, the \ac{ISI} in \ac{VN}2 is slightly reduced compared to \ac{VN}1. 
The same is true for \ac{VN}3 (see \circnum{6}), which also generally exhibits lower \ac{ISI}-induced \acp{SER} than \ac{VN}1.
However, for \ac{VN}3, the performance ordering of the high-\ac{ISI} curves is different than for the other \acp{VN} (see \circnum{6}): $t^\mathrm{peak}_{g^*}$ performs best, followed by $\mathbb{E}[T]$ and $t^\mathrm{peak}$.
This ordering occurs for \ac{VN}3, because its \ac{CIR} carries a significant fraction of signal energy in the time interval following the global peak (more energy than, e.g., the tail of \ac{VN}1).
As a consequence, much of the signal energy of previous symbols impacts the detector in the form of \ac{ISI}, leading to higher \acp{SER}.
In contrast, when sampling at $t^\mathrm{peak}_{g^*}$ in \ac{VN}3, the tail beyond $t^\mathrm{peak}_{g^*}$ contains much less energy, leading to reduced \ac{ISI}.
Note that $t^\mathrm{peak}_{g^*}$ and $t^\mathrm{peak}$ yield similar \ac{CIR} amplitudes (see Fig.~\ref{fig:spectra}b)), i.e., this is not the deciding factor for the performance differences.

Lastly, across all \acp{VN}, the high-\ac{ISI} curves do not reach the performance of the low-\ac{ISI} curves even for large $L$ (compare triangle and circle marker yellow curves), since a larger $L=8$ improves \ac{ISI} estimation but simultaneously accumulates more noise across additional taps, i.e., increases the variance of the signal-dependent Poisson noise. This is confirmed by the thick green curve for $L=20$ in \ac{VN}3, which yields no performance improvement over $L=8$ (see \circnum{5}) in the high-\ac{ISI} regime.

In summary, the \ac{SER} results demonstrate that the \ac{MC} system design is topology-dependent, underscoring the usefulness of the  multipath channel metrics derived in Section~\ref{sec:Multi-Path_Metrics}:  
$\tau_\mathrm{RMS}$ directly informs the choice of $T_\mathrm{s}$ and $L$, while the \ac{VN} topology, e.g., reflected in $t_{g^*}^\mathrm{peak}$ and $\mathbb{E}[T]$, governs the choice of $t_\mathrm{s}$.

\section{Conclusion}\label{sec:Conclusion}
In this paper, we presented a focused treatment of \ac{MC} in \acp{VN} from a multipath communications perspective, inspired by classical tools from \ac{MWC}.
Leveraging the \ac{MIGHT} model, we derived closed-form expressions for a channel noise model, the classical multipath channel metrics \textit{\ac{RMS} delay spread}, \textit{mean excess delay}, and \textit{coherence bandwidth}, and the frequency response of the channel, which jointly characterize the severity of \ac{ISI} and the channel's frequency-selectivity. 
Exploiting the channel model and metrics, we proposed \ac{VN}-adapted sampling strategies and a \ac{DF} detector and illustrated the resulting communication performance in exemplary (large-scale) \acp{VN}.
We find that channel metrics borrowed from \ac{MWC} support the analysis of \ac{MC} in \acp{VN} and aid the choice of critical system parameters affecting \ac{ISI} and thus communication performance. 

Future work foresees the extension of the framework to multi-\ac{Tx}, multi-\ac{Rx} systems. Moreover, more \ac{VN}-adapted (sequence)-detectors could be considered and pulse-shaping at the \ac{Tx} could be investigated using the here proposed tools.

\appendix[Derivation of Noise Model]
Let $T_{p_i}$ denote the \ac{FPT} of a molecule released at time $t=0$ at the inlet of pipe $p_i$ (at $z_i=0$) and observed at its outlet located at\footnote{These spatial arguments are chosen without loss of generality.} $z=l_i$. 
Under advective-diffusive transport, the \ac{PDF} of $T_{p_i}$, denoted by $f_{T_{p_i}}(t_{p_i})$, is an \ac{IG}~\cite{Jakumeit2026}.
For a path $P_g$ comprising multiple pipes (and potentially bifurcations and junctions) in series, the path \ac{FPT} $T_g$ is equal to the sum of the individual pipe \acp{FPT} $T_{p_i}$~\cite[Eq.~(9)]{Jakumeit2026}, i.e.,
\begin{equation}
    T_g=\sum_{\{p_i\,\vert\, i\in\mathcal{E}_g\}} T_{p_i}\,.
\end{equation}
As the molecule transport in each pipe $p_i$ is modeled as an \textit{independent} advection-diffusion process~\cite{Jakumeit2026}, the \ac{PDF} of $T_g$ is obtained as~\cite[Eq.~(10)]{Jakumeit2026}
\begin{equation}\label{eqn:convolution}
    f_{T_g}(t_g)=\Asterisk_{\{p_i\,|\,i\in\mathcal{E}_k\}} f_{T_{p_i}}(t_{p_i}),
\end{equation}
where $\Asterisk_{\{p_i\,|\,i\in\mathcal{E}_k\}}$ denotes the convolution with respect to $t_{p_i}$ over all pipes contained in path $P_g$.
Since any given molecule released at the \ac{Tx} propagates through path $P_g$ and eventually arrives at the \ac{Rx} with probability $\gamma_{P_g}$, as dictated deterministically by the flow rates in the \ac{VN}, see~\eqref{eqn:path_fraction}, the probability of a molecule being located within the $\ac{Rx}$ in pipe $p_b$ at time $t_g=t$ is $\frac{l_\mathrm{Rx}}{\bar{u}_b}\gamma_{P_g}f_{T_g}(t)$ (under the \ac{UCA}).
Therefore, for $N$ molecules impulsively released by the \ac{Tx} at $t=0$ and propagating independently, the \ac{RV} $N_g(t)$, representing the number of molecules that traverse path $P_g$ and are observed within the \ac{Rx} at time $t$, is distributed as follows~\cite[Eq.~(70)]{Jamali2019}
\begin{equation}\label{eqn:binomial_distribution}
    N_g(t)\sim\mathrm{Binom}\left(N, \frac{l_\mathrm{Rx}}{\bar{u}_b}\gamma_{P_g}f_{T_g}(t)\right),
\end{equation}
where $\mathrm{Binom}(N,y)$ denotes
a binomially distributed \ac{RV} with $N$ trials and success probability $y$.
Thus, for each of the $N$ molecules, the event of being observed within the \ac{Rx} at time $t$ constitutes a Bernoulli trial with success probability 
$\frac{l_\mathrm{Rx}}{\bar{u}_b}\gamma_{P_g} f_{T_g}(t)$. 
Consequently, the number of observed molecules follows the binomial distribution in~\eqref{eqn:binomial_distribution}. 

Assuming the rare-event limit\footnote{The rare-event limit is justified for typical molecule counts and advection-diffusion parameters in single-duct \ac{MC} systems~\cite{Jamali2019}, and even more so in \acp{VN}, where signal splitting and increased temporal dispersion reduce the observation probability at the \ac{Rx}.}, i.e., small observation probability $\frac{l_\mathrm{Rx}}{\bar{u}_b}\gamma_gf_{T_g}(t_g)$ and a large number of released molecules $N$, the binomial distribution in~\eqref{eqn:binomial_distribution} is well-approximated by~\cite[Eq.~(74)]{Jamali2019}
\begin{equation}\label{eqn:path_poisson_distribution}
    N_g(t)\sim\mathrm{Pois}\left(N \frac{l_\mathrm{Rx}}{\bar{u}_b}\gamma_{P_g}f_{T_g}(t)\right)\,.
\end{equation}
Moreover, in~\cite{Jakumeit2026}, we show that the convolution in~\eqref{eqn:convolution} can be approximated by the \ac{IG} path flux in~\eqref{eqn:path_flux}, i.e., $f_{T_g}(t)\approx \bar{j}_g(t)$.

Lastly, since the individual numbers of molecules at the outlets of different paths $N_g(t)$ in the \ac{VN} are modeled as independent \acp{RV}, each of which is Poisson-distributed, and since the total number of molecules $N_\mathrm{Rx}(t)=\sum_{P_g\in\mathcal{P}_{\mathrm{Tx},\mathrm{Rx}}}N_g(t)$ observed within the \ac{Rx} at time $t$ is the sum of these independent \acp{RV}, $N_\mathrm{Rx}(t)$ is also Poisson-distributed due to the superposition property of the Poisson distribution. 
If we additionally assume the background noise molecules to be distributed according to a uniform concentration throughout the \ac{VN} such that, on average, $\bar{n}$ background noise molecules are observed within the \ac{Rx} at any time $t$, we obtain
\begin{equation}
    N_\mathrm{Rx}(t)\sim \mathrm{Pois} \Bigg(\underbrace{N\frac{l_\mathrm{Rx}}{\bar{u}_b}\sum_{P_g\in\mathcal{P}_{\mathrm{Tx},\mathrm{Rx}}}\gamma_g \bar{j}_g(t)+\bar{n}}_{\overset{\eqref{eqn:CIR_between_Tx_and_RX}}{=}Nh(t)+\bar{n}}\Bigg)\,.
\end{equation}
This result constitutes the continuous-time formulation of the noise model presented in Section~\ref{ssec:discrete_time_noise_model} for a single transmitted symbol 1 and thus completes the derivation.

\bibliography{bibliography}

\end{document}